\newcounter{figurecounter}
\newtheorem*{theorem*}{Theorem}
\def\split{true}
\newtheorem{theorem}{Theorem}
\newtheorem{corollary}[theorem]{Corollary}
\newtheorem{lemma}{Lemma}
\newtheorem{definition}{Definition}
\newtheorem{example}{Example}
\newtheorem{remark}{Remark}
\def\xi{{\rm{R}}}
\newcommand{\ignore}[1]{}
\newcommand{\ep}{\varepsilon}
\newcommand{\dN}{{\mathbb{N}}}
\newcommand{\dR}{\mathbb{R}}
\begin{document}

\pagestyle{plain} \lineskip=1pt\baselineskip=15pt\lineskiplimit=2pt

\title{Choosing a consultant in a dynamic investment problem
\thanks{Lehrer acknowledges the support of grants ISF 591/21 and DFG
KA 5609/1-1.
Solan acknowledges the support of the ISF grant 211/22.}
}
\author{Yuval Cornfeld\thanks{School of Mathematical Sciences,
Tel Aviv University, Tel Aviv 69978, Israel. e-mail:
\textsf{cornfeld@mail.tau.ac.il}.}, Ehud Lehrer\thanks{Department of Economics, Durham University, Durham, DH13LB, UK,  \textsf{ehud.m.lehrer@durham.ac.uk}.} 
,  Eilon Solan\thanks{School of Mathematical Sciences,
Tel Aviv University, Tel Aviv 69978, Israel. e-mail:
\textsf{eilons@post.tau.ac.il}.}}
 \maketitle

\thispagestyle{empty}

  \bigskip\bigskip\bigskip
\noindent{\textsc{Abstract}}:
\begin{quote}
Consider a dynamic decision-making scenario where at every stage the investor has to choose between investing in one of two projects or gathering more information. 
At each stage, the investor may seek counsel from one of several consultants, 
who, for a fixed cost, provide partial information about the realized state. 
We explore the optimal strategy and its dependence on the belief and the consultation cost. Our analysis reveals that if one of the consultants discloses the state with a nonzero probability, this consultant will be 
used in any optimal strategy,
provided the consultation cost
 is sufficiently small. 
\end{quote}

\bigskip

\newpage
\lineskip=1.8pt\baselineskip=18pt\lineskiplimit=0pt \count0=1

\section{Introduction}

Decision problems, 
in which a decision maker (DM) has to select an action when the state is unknown, 
are abundant.
In many cases, to improve her performance, 
the DM can use the services of consultants who, for a certain fee, 
provide information on the unknown state of nature.
Sometimes, 
to further improve her performance,
the DM can approach a certain consultant several times,
or approach different consultants one after the other.
Since different consultants provide information of different types and qualities, and charge different fees,
the DM's optimal strategy,
which dictates which consultant to approach as a function of information gained so far and when to make a decision, may be difficult to derive.

In this paper, we study a simple decision problem with a dynamic choice of consultants,
and derive properties of the optimal strategy.

Two setups that fall naturally into our model
are investments and medical diagnosis.
Before venture capital funds decide whether to make an investment, they consult various experts about the future prospects of the company they consider investing in.
Similarly, before recommending a treatment,
doctors run various tests,
which provide statistical information regarding the patient's situation.
In both of these examples, 
the next expert to be consulted (resp., the next test to be run) may depend on the information provided by earlier experts (resp., tests).

To fix ideas, consider an investor who has two possible investment opportunities, $R$ and $L$, whose profitability depends on the state of nature:
investment $R$ (resp., $L$) yields a profit when the state of nature is $r$ (resp., $\ell$), and $0$ otherwise.
The initial prior that the state of nature is $r$ is $p_0$.
The investor has several consultants at her disposal.
For a fixed cost $c$, a consultant provides information about the state of nature, and thereby about the profitability of each of the investment opportunities. 
This information is given by an experiment \`a la Blackwell, 
whose outcome is conditionally independent of the outcome of past experiments done by the consultant or by other consultants.
At every stage,
the investor can approach one of the consultants, or, if she deems the information the consultants may provide not worth the cost, she can select one of the investment opportunities.

We study the optimal strategy of the investor and the value of the decision problem as a function of both the prior belief and the cost of consulting. We prove that if there is a consultant who, with positive probability, reveals the state of nature, then, provided that the consultation cost is below a certain threshold, in \emph{all} optimal strategies this consultant will be consulted at least once.

Consultants who may reveal the state of nature arise naturally in, e.g., military and competitive intelligence.
Suppose the DM is a country (or a firm), that is looking for information on future military activities of an enemy country (or on technological developments of a competing firm).
The country can employ various intelligence gathering method, such as imagery intelligence, cyber intelligence, signal intelligence, human intelligence, and covert operations. Each means of obtaining information can be thought of as a different consultant.
Some of these methods provide only probabilistic information, while others may sometimes provide decisive information about the enemy's plans (such as the breaking of the enigma in World War II).

We also prove that under a technical condition on the signaling probabilities of the consultants,
the value function is piecewise linear,
and, as a function of the initial belief,
there is a finite number of possible optimal strategies.
This property facilitates the task of finding the optimal strategy.

Finally, we study a restricted investment problem where the consultants have two possible types: revealers and estimators.
A consultant is a \emph{revealer} if with some probability $t > 0$ it reveals the state,
and with the remaining probability $1-t$ it provides no information.
A consultant is an \emph{estimator} if it provides a signal that with probability $q$ matches the state
and with probability $1-q$ mismatches the state.
We show that in a special symmetric case,
where the initial belief is $\frac{1}{2}$ and all consultants are revealers and estimators,
the optimal strategy is either 
(i) to consult no consultant, and immediately select an action,
or
(ii) to selects one consultant and repeatedly consult him until making a decision.

\bigskip
\paragraph{Related literature}
The idea of a sequential test goes back to \cite{dodge1929method}, who proposed the idea of a double-sampling inspection procedure. \cite{neyman1933ix}, greatly advanced the subject by providing an instrument to determine the effectiveness of the different tests. \cite{wald1992sequential}, developed sequential hypothesis testing, which is the basis for our type of decision problem, collecting information to distinguish between two possible states. Using the theory of dynamic programming, \cite{bellman1956problem}, made the calculation of optimal solutions possible. 
\cite{Herman-Chernoff-d768f788-2220-3b7c-95d1-6f942055e7f9}, \citeyear{chernoff1972sequential}, \citeyear{chernoff1973approaches}, and \cite{whittle1964some}, \citeyear{whittle1965some}, focused on asymptotically optimal solutions when the cost goes to zero, as well as results in the related area of bandit problems. 
\cite{raiffa1961applied}, applied the methods of sequential hypothesis testing to the field of statistical decision-making and information acquisition. 
\cite{hellman1969learning}, limited the DMs to strategies with finite memory. Most early papers on sequential hypothesis testing assume that the number of stages is strictly bounded (where the bound is known or unknown), which we do not.

\bigskip
A recent paper exploring sequential decision problems pertinent to our research is \cite{naghshvar2013active}. While their model whose model is more general than ours, it studies a different question, specifically, 
the bounds on information acquisition rate. Their paper shows that an upper bound can be obtained via an analysis of two heuristic strategies for a dynamic selection of actions. One strategy that achieves asymptotic optimality, where the notion of asymptotic optimality, due to Chernoff, implies that the relative difference between the total cost achieved by the proposed policy and the optimal total cost approaches zero as the penalty of wrong investment increases. The second heuristic strategy is shown to achieve asymptotic optimality only in a limited setting such as the problem of a noisy dynamic search. 
See also \cite{wilson2014bounded}, whose model focuses on a decision maker with finite memory, \cite{boehm2020theoretical}, whose model examines the optimal strategies for investment problems with dynamic reward rates in dynamic environments, and a decision criterion that changes over the course of the decision process, 
and \cite{zhang2022analytical}, whose model examines dynamic decision-making with a continuous unknown parameter or state, a methodology focusing on the continuation-value functions created by feasible continuation strategies.

\bigskip
\paragraph{The structure of the paper} In Section \ref{section:model} we introduce the model of a sequential investment problem. In Section \ref{section: Fundamental properties} we present the fundamental properties of the investment problem. Section \ref{section: Revealing consultants} provides the results for consultants that reveal the state. Section \ref{section: rational ratio} provides a sufficient condition that ensures the value function is piecewise bilinear in the prior and the cost. Section \ref{section:Extensions} provides the results for a special family of consultants.

\section{The Model and the Main Results}
\subsection{The Model}
\label{section:model}

There are two state of nature $\Omega = \{r,\ell\}$
and two actions $A = \{R,L\}$;
action $a \in A$ yields the gain $u(a,\omega)$ in state $\omega \in \Omega$. Actions are interpreted as investment opportunities.
Investment $R$ (resp., $L$) yields a profit $u(R,r)$ (resp., $u(L,\ell)$) when the state of nature is $r$ (resp., $\ell$), and $0$ otherwise. We assume w.l.o.g.~that the maximum between $u(R,r)$  and $u(L,\ell)$ is $1$. The state of nature is $r$ with probability $p_0$ (and $\ell$ with probability $1-p_0$).
There are $m$ consultants.
Each consultant $j$ is characterized by a function $S_j : \Omega \to \Delta(S)$,
where $S$ is some given finite set of signals, and $\Delta(S)$ is the set of probability distributions over $S$.

At every stage, the investor can either (a) select a consultant $j \in J := \{1,2,\ldots,m\}$, 
pay a fixed amount $c > 0$, and obtain a signal that is drawn according to $S_j(\omega)$, where $\omega$ is the state of nature, or (b) select one of the actions in $A$ and terminate the investment problem. The goal of the investor is to maximize her expected total payoff, namely, the expected gain from choosing the correct action minus the total undiscounted expected payments she made to consultants.
We assume that $c < \max\{u(R,r), u(L,\ell)\}$.
If $c \geq \max\{u(R,r), u(L,\ell)\}$, then it is optimal for the investor to never consult any consultant.

We denote the investment problems by $G=(p_0,J,c)$, where $p_0$ is the initial probability of $r$, $J$ is the set of consultants, and $c$ is the consultation cost.

A \emph{history} is a finite sequence of pairs -- a consultant and a signal. The history determines, 
through Bayes rule, the decision maker's posterior belief about the state of nature at that history.

A \emph{strategy} is a function $\sigma$ from the set of all finite histories, denoted  \emph{H}, to $A \cup J$. Denote the strategy space by $\Sigma$. A strategy is \emph{Markovian} if the choice at each history depends only on the posterior belief over $\Omega$.
Denote the expected payoff of a strategy $\sigma$ for $G=(p_0,J,c)$ by $\gamma_J(p_0,c;\sigma)$.
Note that the function $p_0 \mapsto \gamma_J(p_0,c;\sigma)$ is linear in both $p_0$ and $c$.
For an elaboration on this point, see the proof of Lemma~\ref{theorem:simple:properties}.

\subsection{Fundamental properties of the investment problem}\label{section: Fundamental properties}

In this section we present fundamental properties of the model: the existence of an optimal strategy, the linearity of the payoff of a strategy as a function of the prior and the cost, and the dynamic programming characterization of the value.
\bigskip

Given a finite set $J$ of consultants, the \emph{value} function $V_{J}: [0,1] \times (0,1)\mapsto L$ is defined by:
$$V_J(p_0,c):=\sup_{\sigma \in \Sigma}\gamma_J(p_0,c;\sigma).$$

\begin{definition}
For each $s \in S$, $\omega \in \Omega$, and $j \in J$, 
denote by $q(\omega\rvert s,j)$ the conditional probability of state $\omega$ upon receiving the signal $s$ from consultant $j$, when the prior belief is $(1/2,1/2)$: 
\[ 
q(\omega\rvert s,j)= \frac{S_j(s\rvert \omega)}{S_j(s\rvert \omega) + S_j(s\rvert \omega^c)},
\]
where
$\omega^c$ is the complementary state to the state $\omega$. \end{definition}
Note that 
$\frac{q(\omega\rvert s,j)}{q(\omega^c\rvert s,j)}= 
\frac{S_j(s\rvert \omega)}{S_j(s\rvert \omega^c)}$.
With this notation, when the prior belief is $p_0$, the posterior belief after receiving the signal $s$ from consultant $j$ is 
\begin{align}
\label{equ:posterior}
post(p_0,s,j)  &  \coloneq \frac{p_0\cdot {q(r\rvert s,j)}}{p_0\cdot {{q(r\rvert s,j)}} + (1 - p_0)\cdot {q(\ell\rvert s,j)}}\\ & =\frac{1}{1+\frac{1-p_0}{p_0}\cdot \frac{q(\ell\rvert s,j)}{q(r\rvert s,j)}}=\frac{1}{1+\frac{1-p_0}{p_0}\cdot \frac{S_j(s\rvert \ell)}{S_j(s\rvert r)}} . \nonumber
\end{align}

\begin{remark} \rm\label{posterior formula}
   (i) The posterior belief after receiving the signal $s_1$ from consultant $j$ and the signal $s_2$ from consultant $i$ is $\frac{1}{1+\frac{1-p_0}{p_0}\cdot \frac{q(\ell\rvert s_1,j)}{q(r\rvert s_1,j)}\frac{q(\ell\rvert s_2,i)}{q(r\rvert s_2,i)}}$.
In particular, the posterior belief after receiving the signal $s$ from consultant $j$ for $n$ consecutive stages is 
$\frac{1}{1+\frac{1-p_0}{p_0}\cdot (\frac{q(\ell\rvert s_1,j)}{q(r\rvert s_1,j)})^n}=
\frac{p_0\cdot {q(r\rvert s,j)}^n}{p_0\cdot {{q(r\rvert s,j)}^n} + (1 - p_0)\cdot {q(\ell\rvert s,j)}^n}$.\\ \vskip .3cm

    (ii) 
It is well known that, 
in terms of the 
  log-likelihood ratio, updating of belief is additive:
\begin{equation}
\label{equ:log}
   \ln\left(\frac{post(p_0,s,j)}{1-post(p_0,s,j)}\right)
= \ln\left(\frac{p_0}{1-p_0}\right) + \ln\left( \frac{q(r \mid s,j)}{1-q(r \mid s,j)}\right)= \ln\left(\frac{p_0}{1-p_0}\right) + \ln\left(\frac{S_j(s\rvert r)}{ S_j(s\rvert \ell)}\right).
   \end{equation} 
\end{remark}

The following result lists several simple properties of the value function. The proof is standard and appears in  Appendix~\ref{appendix:theorem:simple:properties}.

\begin{lemma}
\label{theorem:simple:properties}
For every investment problem, a Markovian optimal strategy exists.
Moreover, $V_J$ is continuous, convex in $p_0$ for every fixed $c$,
convex and monotonically decreasing in $c$ for every fixed $p_0$,
and satisfies the following recursive equation:
\begin{equation}
\label{equ:recursive}
V_J(p_0,c)=\max_{j \in J}\left\{p_0 u(R,r),(1-p_0)u(L,\ell),
\sum_{s \in S} P_j(p_0,s)\cdot V_J(post(p_0,s,j),c) -c\right\}.
\end{equation}
where $P_j(p_0,s)=p_0 S_j(s|\omega) + (1-p_0)S_j(s|\omega^c)$ is the probability that when the prior is $p_0$, the signal provided by consultant $j$ is $s$.
\end{lemma}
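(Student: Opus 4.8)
The plan is to prove the statements in sequence: the affine structure of the payoff of a fixed strategy, then convexity, monotonicity and continuity of $V_J$, then the recursive equation, and finally existence of a Markovian optimal strategy.

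\textbf{Affine structure and bounds.} Fix $\sigma\in\Sigma$. Conditional on the state $\omega\in\Omega$, the law under $\sigma$ over terminating histories — finite sequences of consultant--signal pairs ending with an action choice — depends on $\sigma$ alone, since at each history $h$ the signal is drawn from $S_{\sigma(h)}(\cdot\mid\omega)$. Letting $g_\omega(\sigma)$ denote the conditional expected decision gain and $n_\omega(\sigma)$ the conditional expected number of consultations in state $\omega$, one obtains
\[
\gamma_J(p_0,c;\sigma)=p_0\bigl(g_r(\sigma)-c\,n_r(\sigma)\bigr)+(1-p_0)\bigl(g_\ell(\sigma)-c\,n_\ell(\sigma)\bigr),
\]
which is affine in $p_0$ for fixed $c$ and affine in $c$ for fixed $p_0$ (strategies that fail to terminate almost surely in some state yield $-\infty$ and may be ignored). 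Since the decision gain never exceeds $\max\{u(R,r),u(L,\ell)\}=1$ and payments are nonnegative, $0\le V_J\le 1$ everywhere.

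\textbf{Convexity, monotonicity, continuity.} For fixed $c$, $V_J(\cdot,c)$ is a supremum of affine functions of $p_0$, hence convex; for fixed $p_0$, $V_J(p_0,\cdot)$ is a supremum of affine nonincreasing functions of $c$, hence convex and nonincreasing. As a supremum of jointly continuous functions, $V_J$ is lower semicontinuous, and together with convexity in each variable this gives continuity in each variable separately on all of $[0,1]$ (convexity yields upper semicontinuity at the endpoints). For joint continuity near an interior point $(p_0^\ast,c^\ast)$: any strategy with $n_r(\sigma)+n_\ell(\sigma)$ exceeding a suitable constant $C=C(p_0^\ast,c^\ast)$ has strictly negative, hence suboptimal, payoff throughout a fixed neighborhood of $(p_0^\ast,c^\ast)$, so there $V_J$ equals the supremum over the remaining strategies; on that family the affine maps $(p_0,c)\mapsto\gamma_J(p_0,c;\sigma)$ have uniformly bounded gradients, so the supremum is Lipschitz, hence continuous, on the neighborhood. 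The endpoints are handled with one-sided versions of the same argument.

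\textbf{Recursive equation.} This is the principle of optimality. For ``$\le$'': a strategy either chooses an action at the empty history, contributing at most one of the first two terms, or first consults some $j\in J$; conditioning on the realized signal $s$ (probability $P_j(p_0,s)$, posterior $post(p_0,s,j)$) and bounding each continuation payoff by $V_J(post(p_0,s,j),c)$ shows its payoff is at most $\sum_{s\in S}P_j(p_0,s)V_J(post(p_0,s,j),c)-c$; taking the supremum over $\sigma$ gives ``$\le$''. For ``$\ge$'', the first two terms are attained by the two one-shot strategies, and the third is approached, to within any $\eps>0$, by consulting $j$ once and then playing an $\eps$-optimal strategy from each posterior.

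\textbf{Existence of a Markovian optimal strategy, and the main obstacle.} Let $\mathcal S=\{p\in[0,1]:V_J(p,c)=\max\{p\,u(R,r),(1-p)u(L,\ell)\}\}$, closed by continuity, and let $\sigma^\ast$ be the Markovian strategy that stops (with a gain-maximizing action) at beliefs in $\mathcal S$ and otherwise consults a consultant attaining the maximum in the recursive equation. Let $(p_t)$ be the induced belief process and $\tau$ its first entry time into $\mathcal S$. On $\{t<\tau\}$ the recursive equation holds with equality, so $\E[V_J(p_{t+1},c)\mid\mathcal F_t]=V_J(p_t,c)+c$; hence $M_t:=V_J(p_{t\wedge\tau},c)-c\,(t\wedge\tau)$ is a martingale and $\E[M_t]=V_J(p_0,c)$ for every $t$. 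Because $0\le V_J\le1$, this forces $c\,\E[t\wedge\tau]\le 1-V_J(p_0,c)\le1$ for all $t$, so $\E[\tau]\le 1/c<\infty$ by monotone convergence and $\tau<\infty$ almost surely. Letting $t\to\infty$ in $\E[M_t]=V_J(p_0,c)$, with bounded convergence for the value term and monotone convergence for the cost term, yields $\E\bigl[\max\{p_\tau u(R,r),(1-p_\tau)u(L,\ell)\}\bigr]-c\,\E[\tau]=V_J(p_0,c)$, and the left side is exactly $\gamma_J(p_0,c;\sigma^\ast)$, so $\sigma^\ast$ is optimal. I expect this last part — establishing $\tau<\infty$ almost surely and justifying the interchange of limit and expectation over the unbounded horizon — to be the main obstacle; the boundedness of $V_J$ and the martingale identity are precisely what makes it go through, while the affine structure, convexity and monotonicity are routine.
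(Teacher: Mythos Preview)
Your proof is correct and shares the paper's starting point: write $\gamma_J(p_0,c;\sigma)$ as a bilinear function of $(p_0,c)$ by conditioning on the state, then obtain convexity and monotonicity of $V_J$ as a supremum of such functions.

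The approaches diverge on the remaining items. For the recursive equation the paper simply invokes Bellman's principle, whereas you spell out both inequalities; this is just a difference in level of detail. For continuity, the paper argues that convexity gives continuity on the open square and then uses the recursive equation itself to pin down $V_J$ near $p_0\in\{0,1\}$ (where $V_J(p,c)=p\,u(R,r)$ for $p$ close to~$1$, etc.); your route via uniformly bounded gradients on a restricted family of strategies is more hands-on and avoids appealing to the recursion, at the cost of a slightly delicate endpoint discussion. The most substantive difference is existence of a \emph{Markovian} optimum. The paper's proof establishes existence of an optimal strategy by compactness of the strategy space in the product topology together with continuity of the payoff, but does not explicitly argue why the optimum can be taken Markovian. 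Your explicit construction of $\sigma^\ast$ from the Bellman maximizers, together with the martingale identity for $M_t=V_J(p_{t\wedge\tau},c)-c\,(t\wedge\tau)$ and the bound $\E[\tau]\le 1/c$, proves optimality of a Markovian strategy directly and handles the unbounded-horizon limit cleanly. So your argument is somewhat longer but more self-contained on precisely the point the paper leaves implicit.
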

 
For every fixed $c$, the value function $V_J$ is continuous and convex. 
This implies that when $p_0$ is sufficiently high, $p_0 u(R,r)\ge 
\max_{j \in J}\left\{\sum_{s \in S} P_j(p_0,s)\cdot V_J(post(p_0,s,j),c) -c\right\}$. 
Consequently, there is a cutoff point  ${p_R}<1$ such that 
every Markovian optimal strategy selects $R$ when $p_0\in ({p_R},1]$. Likewise, there is a threshold $0<{p_L}$  such that 
every Markovian optimal strategy selects $L$ when $p_0$ is in $[0,{p_L})$. When $p \in ({p_L},{p_R})$, an
optimal strategy selects one of the consultants to obtain a signal from.
We are unaware of an analytic characterization of these thresholds, and the only crude bounds we have for them are ${p_L} \ge c$ and ${p_R} \le 1-c$.
Note that the posterior belief is a martingale. Thus, as soon as an informative consultant is used infinitely often,
the posteriors converge to 0 or 1.
Therefore, when at least one of the consultants in $J$ provides information, as $c$ goes to 0, $p_L$ goes to 0 and $p_R$ goes to 1.

\begin{example}\label{graph 8}
Consider an investment problem with three signals $S = \{r,\ell,\emptyset\}$, a consultation cost of $0.01$, and two consultants, whose signaling functions are as follows:
\[
\begin{array}{lll}
S_1(r|r) = S_1(\ell|\ell) = 800/1000, & S_1(\ell|r) = S_1(r|\ell) = 200/1000, & S_1(\emptyset|r) = S_1(\emptyset|\ell) = 0,\\
S_2(r|r) = S_2(\ell|\ell) = 625/1000, & S_2(\ell|r) = S_2(r|\ell) = 35/1000, & S_2(\emptyset|r) = S_2(\emptyset|\ell) = 340/1000.
\end{array}
\]
A strategy is optimal whenever its selection is as follows:
\[ 
\begin{array}{|c|c|}
\hbox{Range} & \hbox{Action}\\
\hline
p \leq 0.025 & L\\
0.025 \leq p \leq 0.088 & \hbox{consultant 1}\\
0.088 \leq p \leq 0.367 & \hbox{consultant 2}\\
0.367 \leq p \leq 0.633 & \hbox{consultant 1 or consultant 2}\\
0.633 \leq p \leq 0.912 & \hbox{consultant 2}\\
0.912 \leq p \leq 0.975 & \hbox{consultant 1}\\
0.975 \leq p & R\\
\hline
\end{array}
\]
Thus, there are a continuum of optimal strategies:
at each belief in the interval $[0.367,0.633]$,
it is optimal to consult either consultant.
The value function and optimal strategies are displayed in Figure~\thefigurecounter;
Each color in the graph represents the optimal actions for that belief: red (respectively, green, blue, black) corresponds to consulting consultant 1 
(respectively, consultant 2, either consultant, choosing an investment).

Similarly, for every $n \in \mathbb{N}$, there is an investment problem with $n$ consultants, 
such that all are used in any optimal strategy,
and for each consultant there is an optimal strategy that chooses it at $p_0=1/2$.
\end{example}
\color{black}

\includegraphics[width=\textwidth]{{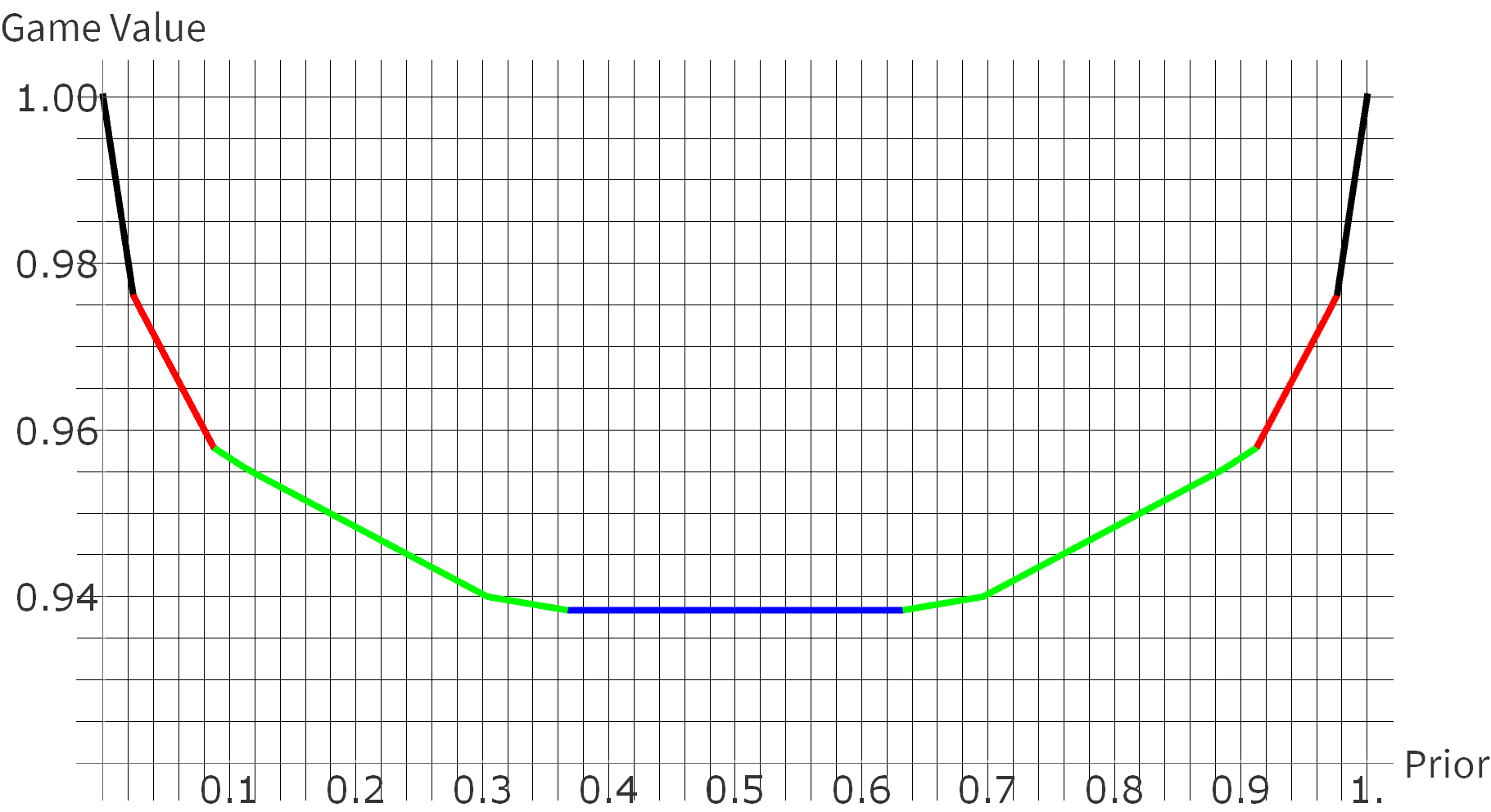}}\\
Figure \thefigurecounter: The value function related to Example \ref{graph 8} and the ranges in which different consultants are used.
\setcounter{figurecounter}{\thefigurecounter+1}

\bigskip

The following example exhibits two effects of reducing the consultation cost:
the value function increases, 
and the frequency of consultants' utilization rises.

\begin{example}\label{graph 7}
Consider an investment problem with three signals $S = \{r,\ell,\emptyset\}$ and two consultants:
\[
\begin{array}{lll}
S_1(r|r) = S_1(\ell|\ell) = 0.8, & S_1(\ell|r) = S_1(r|\ell) = 0.2, & S_1(\emptyset|r) = S_1(\emptyset|\ell) = 0,\\
S_2(r|r) = S_2(\ell|\ell) = 0.05, & S_2(\ell|r) = S_2(r|\ell) = 0, & S_2(\emptyset|r) = S_2(\emptyset|\ell) = 0.95.
\end{array}
\]
Receiving a signal from the first consultant enables the investor to update her belief, yet she remains uncertain about the true state. The second consultant, however, reveals the true state with a probability of $0.05$, while keeping the investor's belief unchanged with a probability of $0.95$.

Figure~\thefigurecounter\ exhibits the value function for various consultation costs. In this figure, each colored line represents the value function for a different cost, from $c=0.02$ to $0.3$. For $c=0.3$ (the bottom green line) the investor never consults a consultant. As the cost $c$ diminishes, the number of different optimal strategies rises, increasing the number of linear segments within the value function.
The expected payoff of each strategy is linear with respect to the prior. 
Thus, for any fixed cost, each linear segment in the graph corresponds to a different strategy, and the intersection between two segments corresponds to a change of the optimal strategy (a black dot represents a change of strategy). As the intersection
between two segments is a change of strategy, 
when the prior is in the interior of a segment
and the optimal strategy indicates to select consultant 1, the posterior is in the interior of another segment.

\includegraphics[width=0.9\textwidth]{{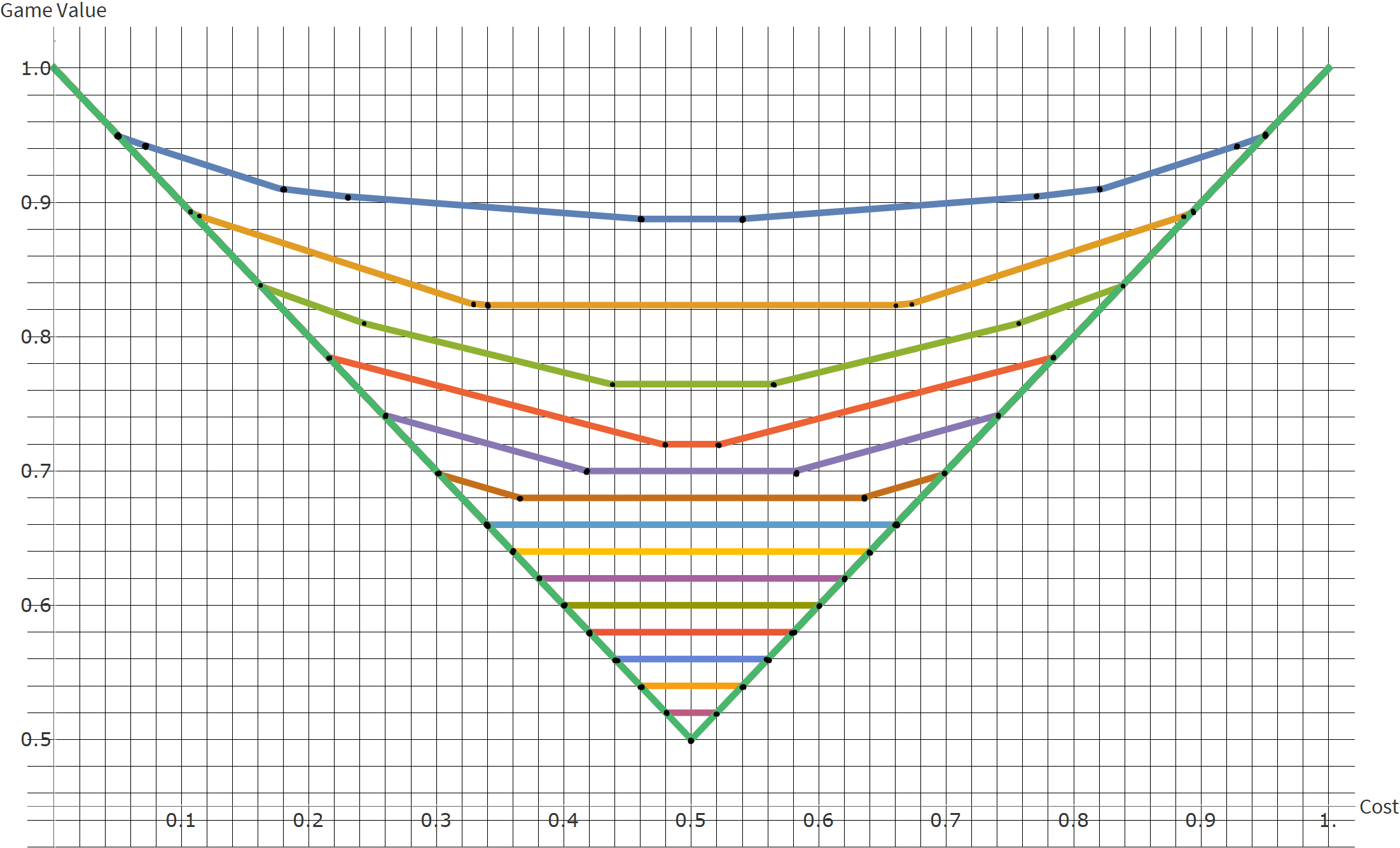}}
\includegraphics[width=0.1\textwidth]{{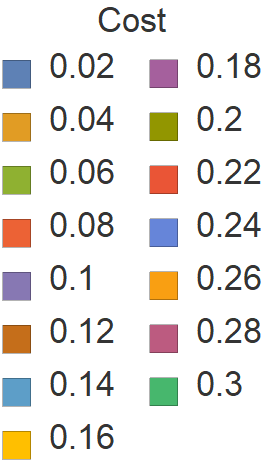}}
\centerline{Figure \thefigurecounter: The value function of the investment problem in Example \ref{graph 7}.}
\end{example}

\setcounter{figurecounter}{\thefigurecounter+1}

\subsection{Revealing consultants}\label{section: Revealing consultants}

In this section, we present the concept of revealing signals and study their role in the optimal strategy.
A consultant is called \emph{revealing} if with positive probability, the signal that he provides reveals the state of nature.

\begin{definition}
Let $j \in J$, $s \in S$, and $\omega \in \Omega$.
The signal $s$ is \emph{$\omega$-revealing} by consultant $j$ if $S_j(s \mid \omega) > 0$ and $q(\omega \mid s,j) = 1$. 
Such a signal is called \emph{revealing by $j$}. 
A consultant $j$ who has an $\omega$-revealing signal, for every $\omega \in \Omega$, is called \emph{revealing}.
\end{definition}

Note that the signal that reveals the state of nature may depend on the state.
Examples of revealing signals are positive biopsies, intelligence reports from well-placed agents, and striking oil when searching for oil reservoirs.

Will the DM use revealing consultants in her optimal strategy?
The answer seems to depend on the probability by which the revealing signals are provided.
If these signals are provided with high (resp., low) probability, 
the revealing consultant will (resp., will not) be used.
As the next result states, the probability by which the revealing signals are provided should be compared to the consultation cost:
if this cost is low, the revealing consultant will be used.

\begin{theorem}\label{revealer,estimator theorem}
For every prior $p_0 \in [0,1]$, every revealing consultant $j_*$, and every set of consultants $J^-$ that are not revealing, there exists $C>0$ such that for every $c\le C$ all optimal strategies in the investment problem $G=(p_0,J^- \cup \{j_*\},c)$ consult $j_*$ at least once.
\end{theorem}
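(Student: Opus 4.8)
The plan is to compare, at the given prior $p_0$, the payoff of an arbitrary optimal strategy that never consults $j_*$ with a specific strategy that does consult $j_*$, and to show the latter is strictly better once $c$ is small. First I would dispose of the trivial case: if $p_0 \notin (p_L^-, p_R^-)$ for the problem $G^- = (p_0, J^-, c)$ — or more precisely, if at $p_0$ it is optimal in $G$ to choose an action immediately — then we argue separately that consulting $j_*$ once and then acting optimally already beats the immediate action by a margin bounded below independently of $c$ for small $c$; I elaborate on this below. So assume an optimal strategy $\sigma$ in $G$ exists that never calls $j_*$. Because $\sigma$ never uses $j_*$, it is a feasible strategy in $G^-$, hence $\gamma_{J^-\cup\{j_*\}}(p_0,c;\sigma) \le V_{J^-}(p_0,c)$; conversely $V_{J^-}(p_0,c) \le V_{J^-\cup\{j_*\}}(p_0,c)$, so it suffices to exhibit a strategy using $j_*$ whose payoff strictly exceeds $V_{J^-}(p_0,c)$.

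The strategy I would construct is: consult $j_*$ exactly once; if the received signal $s$ is $\omega$-revealing (which happens with probability $\rho := \tfrac12\sum_\omega [\,p_0' \cdot S_{j_*}(s_\omega\mid\omega)\,]$-type expression, in any case with total probability $\ge \rho_0 > 0$ bounded below uniformly in $c$ since $j_*$ is revealing), then the state is known, so play the correct action for a guaranteed payoff of (at least) $\min\{u(R,r),u(L,\ell)\}$ on that event, minus the single cost $c$; if $s$ is not revealing, the posterior is $post(p_0,s,j_*)$, and from there play an optimal strategy in $G^- $ (not using $j_*$ again), getting $V_{J^-}(post(p_0,s,j_*),c)$. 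The key comparison is then between this payoff and $V_{J^-}(p_0,c)$. The main tool is convexity of $V_{J^-}(\cdot,c)$ in the prior (Lemma~\ref{theorem:simple:properties}): on the non-revealing signals the posteriors average back (via the martingale property, conditioned on non-revelation) to some belief, and by convexity the expected continuation value on that event is at least $V_{J^-}$ evaluated at that averaged belief; combined with the revealing event, where we collect essentially the full payoff $1$ worth of "correctness" (at least $\min\{u(R,r),u(L,\ell)\}$) whereas $V_{J^-}$ at an interior belief is strictly below that, we obtain a strict gain on the revealing event of a magnitude $\delta_0 > 0$ that does not depend on $c$. The total extra cost incurred relative to $V_{J^-}(p_0,c)$ is at most $c$ (one extra consultation). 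So the net gain is at least $\delta_0 - c$, which is positive for $c < \delta_0 =: C$.

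Making the "$\delta_0$ independent of $c$" claim precise is the main obstacle, and it has two sub-issues. First, I must show that on the non-revealing event the expected continuation value is not much worse than $V_{J^-}(p_0,c)$; the clean way is: let $\mu$ be the law of $post(p_0,s,j_*)$ conditioned on $s$ non-revealing, with mean $\bar p$; by the tower property $p_0$ is a mixture of $\bar p$ (weight $1-\rho$) and the revealing posteriors $\{0,1\}$ (weight $\rho$); convexity of $V_{J^-}(\cdot,c)$ gives $V_{J^-}(p_0,c) \le (1-\rho)\,\E_\mu[V_{J^-}(\cdot,c)] + \rho\cdot(\text{value at }0\text{ or }1)$, and since $V_{J^-}$ at a revealing belief equals the corresponding $u(a,\omega)$, we can rearrange to lower-bound $\E_\mu[V_{J^-}]$ against $V_{J^-}(p_0,c)$. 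Second, I must lower-bound the gap between the guaranteed correct-action payoff on the revealing event and what $V_{J^-}$ would have delivered there; this needs that $V_{J^-}(p_0,c) < \min\{u(R,r),u(L,\ell)\}$ strictly when relevant, with a gap uniform in small $c$ — this follows because the no-consultation payoff at $p_0$ is at most $\max\{p_0 u(R,r),(1-p_0)u(L,\ell)\}$ and, while $V_{J^-}$ can exceed this, it is still bounded by $1 = \max\{u(R,r),u(L,\ell)\}$ minus a loss term; one must be a little careful when $u(R,r)=u(L,\ell)=1$, in which case the argument instead uses that on the revealing event we save all future consultation costs the benchmark would have paid, but the cleanest route is to note $V_{J^-}(p_0,c)\le V_{J^-}(p_0,0)$ and at interior priors the zero-cost value of a non-revealing family is still strictly below $1$ by a Blackwell-informativeness / strict-concavity argument, giving the uniform gap $\delta_0$. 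Finally, the boundary-prior case ($p_0$ at which $G$ immediately acts) is handled by the same one-shot construction: consulting $j_*$ once yields, with probability $\ge \rho_0$, a move to a revealing belief where the payoff is $\min\{u(R,r),u(L,\ell)\}$, strictly above the immediate-action payoff at a sufficiently extreme but fixed $p_0$ — wait, this can fail if $p_0 = 0$ or $1$ exactly, but then immediate action is already optimal and also consulting $j_*$ changes nothing on the revealing event, so one checks that even there an optimal strategy need not consult $j_*$ only if $p_0\in\{0,1\}$; the theorem as stated should thus implicitly exclude, or the proof should note, these degenerate endpoints — I would flag this and otherwise carry $C = \delta_0$.
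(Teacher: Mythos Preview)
Your proposal contains a genuine gap at the step you yourself flag as the ``cleanest route.'' You claim that at interior priors the zero-cost value $V_{J^-}(p_0,0)$ of a non-revealing family is strictly below $1$ (equivalently, strictly below $p_0 u(R,r)+(1-p_0)u(L,\ell)$), and you want to extract from this a gap $\delta_0>0$ independent of $c$. This is false: if $J^-$ contains any informative consultant, then at cost $c=0$ the investor can consult indefinitely, the posterior martingale converges a.s.\ to $0$ or $1$, and $V_{J^-}(p_0,0)=p_0 u(R,r)+(1-p_0)u(L,\ell)$ exactly. Consequently the Jensen slack in your one-shot comparison
\[
\rho_r\,u(R,r)+\rho_\ell\,u(L,\ell)+(1-\rho)\,\E_\mu\bigl[V_{J^-}(\cdot,c)\bigr]-V_{J^-}(p_0,c)
\]
tends to $0$ as $c\to 0$, and you have not shown it stays above $c$. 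Your alternative suggestion (``on the revealing event we save all future consultation costs the benchmark would have paid'') is in the right spirit but is not developed, and as stated it does not control the case where the optimal $J^-$-strategy stops early.

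The paper's proof avoids this trap by comparing \emph{rates} rather than levels. It takes as benchmark the strategy $\sigma_*$ that consults $j_*$ repeatedly until revelation, which yields at least $p_0 u(R,r)+(1-p_0)u(L,\ell)-c/\ep$. It then upper-bounds any $J^-$-only strategy by noting that, since no $j\in J^-$ is revealing, $q:=\max_{s,j\in J^-,\omega}q(\omega\mid s,j)<1$, so after $n$ consultations the posterior is confined to a fixed interval shrinking only geometrically in $n$; optimizing over $n$ against the linear cost $cn$ shows that the shortfall of $V_{J^-}(p_0,c)$ from $p_0 u(R,r)+(1-p_0)u(L,\ell)$ is of larger order than $c/\ep$ for small $c$. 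This rate comparison---geometric learning for $J^-$ versus the $O(c)$ deficit for the revealer---is precisely the ingredient your argument is missing.
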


\begin{proof}
Let $\ep>0$ be a lower bound on the probability that the consultant $j_*$ sends a revealing signal. 
Let $\sigma_*$ be the strategy that consults $j_*$ until it reveals the state of nature,
and then selects the action that matches the state.
Since 
the distribution of the revelation stage is dominated by a geometric distribution with parameter $\ep$,
\[ \gamma_{\{j_*\}}(p_0,\sigma_*,c) \geq p_0 u(R,r) + (1-p_0) u(L,\ell) - \frac{c}{\ep}. \]

We now provide an upper bound on the expected payoff given by consulting only consultants in $J^-$. 
Denote $q=\max_{s \in S,j \in J^-,b \in B}q(b\rvert s,j) < 1$. 
Let $\sigma_n$ be a strategy that $n$ times consults a consultant from $J^-$, and then selects the more favorable action.
The posterior belief (the probability that the state is $r$) after stage $n$ lies
between $\frac{p_0\cdot {q}^n}{p_0\cdot {q}^n + (1 - p_0)\cdot {(1 - q)}^n}$ and $\frac{p_0\cdot {(1-q)}^n}{p_0\cdot {(1-q)}^n + (1 - p_0)\cdot {q}^n}$. 
Therefore, for every strategy $\sigma$ that consults only consultants in $J^-$,
$\gamma_J(p_0,c;\sigma)$ cannot be higher than 
\begin{align*}	
\max_{n \in \mathbb{N}}\Biggl[p_0
u(R,r)\left(\frac{p_0\cdot q^n}{p_0\cdot {q}^n + (1 - p_0)\cdot {(1 - q)}^n}\right) - & c \cdot n, \\
(1-p_0)u(L,\ell) & \left(1-\frac{p_0\cdot {(1-q)}^n}{p_0\cdot {(1-q)}^n + (1 - p_0)\cdot {q}^n}\right)- c \cdot n 
\Biggl].
\end{align*}

Provided $c$ is sufficiently small, this quantity is at most
$p_0 u(R,r) + (1-p_0) u(L,\ell) - \frac{c}{\ep}$,
which implies that strategies that never consult $j_*$ are worse than $\sigma_*$. 
Hence, all optimal strategies must consult $j_*$. 
\end{proof}

\subsection{Consultants with a rational ratio}\label{section: rational ratio}

According to Lemma~\ref{theorem:simple:properties},
the value function $V_J$ is convex. As Example~\ref{example:strictly convex} below shows, 
this function may be strictly convex on $[p_L,p_R]$.
In this section we show that under some conditions,
$V_J$ is piecewise linear.

\begin{definition}
\label{def:rational}
A set of consultants $J$ has a \emph{rational ratio} if there exists a real number $Q>0$ such that for each $s \in S$ and $j \in J$,
the ratio 
$\ln\left(\frac{S_j(s|\ell)}{S_j(s|r)}\right)=\ln\left(\frac{q(r\rvert s,j)}{q(\ell\rvert s,j)}\right)$ 
is an integer multiple of $Q$.
\end{definition}

\begin{remark}\label{remark rational}
Recall Eq.\ (\ref{equ:posterior}) and Remark \ref{posterior formula}.
    The conditions $$\ln(q(\ell\rvert s_1,j)/q(r\rvert s_1,j))=m \cdot Q,$$ 
    $$\ln(q(\ell\rvert s_2,i)/q(r\rvert s_2,i))=n \cdot Q,$$ 
    where $m$ and $n$ are both positive integers (or both negative integers), 
imply that 
$$\frac{p_0\cdot {q(r\rvert s_1,j)}^n}{p_0\cdot {{q(r\rvert s_1,j)}^n} + (1 - p_0)\cdot {q(\ell\rvert s_1,j)}^n}=\frac{1}{1 + \frac{1 - p_0}{p_0}\cdot e^{nmQ}}=\frac{p_0\cdot {q(r\rvert s_2,i)}^m}{p_0\cdot {{q(r\rvert s_2,i)}^m} + (1 - p_0)\cdot {q(\ell\rvert s_2,i)}^m}.$$ That is, obtaining $n$ times the signal $s_1$ from consultant $j$ yields the same posterior belief as obtaining $m$ times the signal $s_2$ from consultant $i$.
    Similarly, the conditions $$\ln(q(\ell\rvert s_1,j)/q(r\rvert s_1,j))=m \cdot Q,$$ $$\ln(q(\ell\rvert s_2,i)/q(r\rvert s_2,i))=-n \cdot Q,$$ where $m$ and $n$ are positive integers, imply that obtaining $n$ times the signal $s_1$ from consultant $j$ and then obtaining $m$ times the signal $s_2$ from consultant $i$ yields the posterior belief $p_0$.
\end{remark}

\begin{theorem}\label{finite posterior piecewise}
Let $G=(p_0,J,c)$ be an investment problem, and let $J$ be a set of consultants with a \emph{rational ratio}. 
Then 
(i) there is a finite set $\mathcal{P} \subset [p_L,p_R]$ such that for every history,
 if the posterior belief at history $h$ is in $[p_L,p_R]$, then it is in  $\mathcal{P}$, and 
(ii) the value function is piecewise bilinear in $p_0$.

\end{theorem}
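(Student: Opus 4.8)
The plan is to exploit the log-likelihood representation from Remark~\ref{posterior formula}(ii): working in the coordinate $x = \ln(p_0/(1-p_0)) \in [-\infty,\infty]$, each signal $s$ from consultant $j$ shifts the log-likelihood ratio additively by $\ln(S_j(s|r)/S_j(s|\ell)) = -\ln(q(\ell|s,j)/q(r|s,j))$, which by the rational-ratio hypothesis equals $k_{s,j}Q$ for some integer $k_{s,j}$ (the sign being absorbed into $k_{s,j}$). Hence every posterior reachable from the prior $p_0$ has log-likelihood ratio of the form $x_0 + kQ$ for some integer $k$, where $x_0 = \ln(p_0/(1-p_0))$. In other words, the set of all reachable posteriors lies on the discrete lattice $\Lambda := \{\,\text{logit}^{-1}(x_0 + kQ) : k \in \mathbb{Z}\,\}$, which is a countable set accumulating only at the endpoints $0$ and $1$.

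For part (i), I would intersect this lattice with the compact interval $[p_L,p_R] \subset (0,1)$. Since $\Lambda$ has no accumulation point inside the open interval $(0,1)$ — the only accumulation points are at $p=0$ and $p=1$, which correspond to $x = \pm\infty$ — the intersection $\mathcal{P} := \Lambda \cap [p_L,p_R]$ is finite. By the discussion following Lemma~\ref{theorem:simple:properties}, a Markovian optimal strategy selects an action (and hence the process stops) whenever the posterior leaves $[p_L,p_R]$, so every posterior that ever arises while the investor is still consulting lies in $[p_L,p_R]$, and therefore in the finite set $\mathcal{P}$. This gives (i).

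For part (ii), I would argue that on each of the finitely many lattice points in $\mathcal{P}$ there are only finitely many "continuation choices" — either stop and take $R$, stop and take $L$, or consult some $j \in J$ — so there are only finitely many Markovian strategies whose play, restricted to beliefs that actually arise, matters. As noted in the model section and in Lemma~\ref{theorem:simple:properties}, for a fixed strategy $\sigma$ the payoff $\gamma_J(p_0,c;\sigma)$ is linear (in fact affine) in $p_0$ and in $c$, i.e.\ bilinear. Since $V_J(p_0,c) = \max_\sigma \gamma_J(p_0,c;\sigma)$ and, by the finiteness just described, the supremum over $\Sigma$ is attained by a maximum over finitely many bilinear functions as $p_0$ ranges over a neighborhood of any fixed point, $V_J$ is a pointwise maximum of finitely many bilinear functions, hence piecewise bilinear. (One should phrase this carefully: the relevant finite collection of strategies may depend on the lattice $\Lambda$, which depends on $x_0$; but since changing $p_0$ within a fixed linear segment of the value function does not change which lattice-shifted posteriors are reached in a qualitatively new way, a covering argument by the finitely many break points $p_L,p_R$ and the images of $\mathcal{P}$ under the consultants' Bayesian updates closes this.)

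The main obstacle I anticipate is the last point in part (ii): making rigorous the claim that finitely many bilinear pieces suffice \emph{uniformly} as $p_0$ varies. The lattice $\Lambda$ itself moves with $p_0$, so one must either (a) show that the optimal strategy, viewed as a map from posteriors in $[p_L,p_R]$ to $A \cup J$, can be chosen to depend only on the "integer coordinate" $k$ and not on $x_0$, thereby decoupling the combinatorics from $p_0$, or (b) invoke convexity of $V_J$ (Lemma~\ref{theorem:simple:properties}) together with the fact that a convex function which is a countable infimum/supremum of affine functions and is reached by strategies with uniformly-bounded-in-probability stopping times must actually be a finite max of affine pieces on any compact subinterval of $(p_L,p_R)$. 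I would pursue route (a): fix the "strategy skeleton" as a function of $k \in \mathbb{Z}$, observe it induces, for each $p_0$, a genuine strategy with the same combinatorial structure, and then the payoff is bilinear in $(p_0,c)$ with coefficients that are themselves locally constant in $p_0$, yielding the piecewise-bilinear conclusion.
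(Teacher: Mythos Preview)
Your proposal is correct and follows essentially the same route as the paper. Part~(i) is identical: pass to log-likelihood coordinates via Eq.~\eqref{equ:log}, observe that the reachable posteriors form the lattice $\{x_0 + kQ : k \in \mathbb{Z}\}$, and intersect with the compact interval $[\mathrm{logit}(p_L),\mathrm{logit}(p_R)]$ to get finiteness. For part~(ii), the paper does exactly your route~(a): it defines a pure Markovian strategy as a function of the integer offset $k$ (``the distance, measured in terms of $Q$, between the actual posterior and the prior''), notes that $|k|$ is bounded uniformly over $p_0 \in [p_L,p_R]$, so there are only finitely many such skeletons, each yielding a payoff linear in $p_0$; the value is then a maximum of finitely many linear functions.

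One small sharpening: in your description of route~(a) you say the payoff coefficients are ``locally constant in $p_0$''. In fact, once a skeleton $\tau : \mathbb{Z} \to A \cup J$ is fixed, the quantities $P_r,P_\ell,E_r,E_\ell$ from Eq.~\eqref{equ:payoff} depend only on the induced random walk on $k$ conditional on each state, which is entirely independent of $p_0$; hence the payoff is \emph{globally} affine in $(p_0,c)$ for each skeleton. This makes the piecewise-bilinear conclusion immediate and lets you discard the circular-sounding covering argument in your parenthetical.
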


\begin{proof} Recall 
Eq.~\eqref{equ:log}, and 
let $Q$ be the real number in Definition~\ref{def:rational}.
The difference $\ln\left(\frac{q_t}{1-q_t}\right)-\ln\left(\frac{p_0}{1-p_0}\right)$,
where $q_t$ is the posterior belief at stage $t$, is an integer multiple of $Q$.
Since for every belief $p>1-p_R$ or $p<p_L$, the optimal strategy at that belief is to choose an investment,
part (i)  follows.
To establish (ii), let us revisit the optimal strategy's general structure. Recall that at any belief $p < p_L$ (or $p > p_R$), the optimal strategy entails selecting $L$ (or $R$). Given this and from part (i),
 any history (consisting of past chosen consultants and the random signal they generated) that did not follow taking $L$ or $R$ as actions, corresponds to one of finitely many posteriors.
Moreover, each such history translates into a discrete movement along the log-likelihood scale, incrementing or decrementing by integer multiples of $Q$. This movement mirrors the adjustments made to beliefs, either towards $\ell$ or $r$. Thus, 
a pure Markovian strategy can be equivalently defined on the distance, measured in terms of $Q$, between the actual posterior and the prior, traced along the log-likelihood scale. 
Since this distance is uniformly bounded (for all the priors in the range of $[p_L,p_R]$), only a finite number of options exist for {such} a pure Markovian strategy.
{The payoff corresponding to each of these strategies is linear in $p_0$.}
 Hence, the value function is the maximum among a finite set of linear functions, resulting in a piecewise linear one. \end{proof}

The following example shows that when the consultants do not have a rational ratio, the value function may not be piecewise linear.

\begin{example}
\label{example:strictly convex}
Consider 
an investment problem with two signals $S = \{a,b\}$ and  one consultant,
whose signaling function is given by:
\[
S_1(a|r) = x, \ \ \ S_1(b|r) = 1-x, \ \ \ S_1(a|\ell) = y, \ \ \ S_1(b|\ell) = 1-y,
\]
where $0 < y < x < 1$.
Denote $C_1 := \ln(x/y)$ and $C_2 :=\ln((1-x)/(1-y))$,
and
assume that $C_1/C_2$ is an irrational number,
so that the consultant does not have a rational ratio. 
We show that for $c$ sufficiently small, the value function is strictly convex on $[p_L,p_R]$.

Since $C_1/C_2$ is irrational, the set $W:= \{nC_1 - mC_2 \colon n,m \in \dN\}$ is dense in $\dR$.
Assume $c$ is small enough so that when obtaining twice the signal $a$ (resp., $b$) when the prior is $p_L$
(resp., $p_R$),
the posterior is still in $(p_L,p_R)$.
This implies that for every 
prior $p_0 \in (p_L,p_R)$ and every posterior $q \in (p_L,p_R)$ 
satisfying $\ln\left(\frac{q}{1-q}\right) - \ln\left(\frac{p_0}{1-p_0}\right) \in W$,
there is a history of signals,  $h$, such that the sequence of posteriors along that history remains in $(p_L,p_R)$, and the posterior after $h$ is $q$.

To show that the value function is strictly convex, fix two distinct beliefs $p_0,q \in (p_L,p_R)$,
and let $\sigma$ be a pure Markovian optimal strategy at the prior $p_0$.
We will show that $\sigma$ is not optimal at the prior $q$.
Indeed, by the discussion above,
there is a history of signals $h$ such that 
(i) when the prior is $p_0$, the posterior beliefs along $h$ are all in $(p_L,p_R)$,
and
(ii) when the prior is $q$, the posterior beliefs after $h$ is not in $[p_L,p_R]$,
while the posterior belief after any strict prefix of $h$ is in $(p_L,p_R)$.
The strategy $\sigma$ is not optimal given the prior $q$, because it advises seeking further information after $h$. However, since it leads to a posterior beyond the range of $[p_L,p_R]$, the optimal response would be to choose one of the investments instead.
\end{example}

\subsection{Three-signal investment problem}
\label{section:Extensions}

In this section, we consider a limited set of consultants, which can provide three signals: $S = \{r, \ell, \emptyset\}$.
The signals $r$ and $\ell$ are positively correlated with the state of nature, while the signal $\emptyset$ provides no information on the state. 
One example for such consultants is medical tests, such as the Covid self-test kits, which provide three signals -- positive, negative, or inconclusive.
We further assume that the problem is symmetric:
the probability of obtaining the signal $r$ when the state is $r$ is the same as the probability of obtaining the signal $\ell$ when the state is $\ell$, and the payoffs for investing in \hbox{R} (resp., \hbox{L}) when the state is $r$ (resp., $\ell$) is 1 and 0 otherwise.

We will see that when the set of available consultants consists only of such consultants, one can derive stronger structural properties of the optimal strategy.

For each consultant $j$, denote by $1-t_j$ the probability that $j$ provides the signal $\emptyset$ in either state, by $q_j \cdot t_j$ the probability that $j$ provides the signal that matches the state, and by $(1-q_j) \cdot t_j$ the probability that $j$ provides the signal that does not match the state, 
see Figure \thefigurecounter(A).
We will identify a consultant $j$ with the pair $(q_j,t_j)$.

We assume w.l.o.g.~that the signal is positively correlated with the state,\footnote{Otherwise, the investor can invert the meaning of the signal.} that is, $q_j > 0.5$. 
The assumption that the probability of the signal that matches (resp., does not match) the state is independent of the state means that the consultant has no bias among the states.

Two extreme types of consultants are the \emph{estimator} who is never silent, that is, $t_j=1$, and provides a probabilistic estimation of the state; 
and the \emph{revealer} who gives a revealing signal or a noninformative signal,
that is, $q_j=1$.
The signaling functions of an estimator 1 and a revealer 2 are, then,
\[
\begin{array}{lll}
S_1(r|r) = S_1(\ell|\ell) = q_1, & S_1(\ell|r) = S_1(r|\ell) = 1-q_1, & S_1(\emptyset|r) = S_1(\emptyset|\ell) = 0,\\
S_2(r|r) = S_2(\ell|\ell) = t_2, & S_2(\ell|r) = S_2(r|\ell) = 0, & S_2(\emptyset|r) = S_2(\emptyset|\ell) = 1-t_2.
\end{array}
\]
In Example~\ref{graph 8}, consultant 1 is an estimator, and
in Example~\ref{graph 7}, consultant 1 is an estimator and consultant 2 is a revealer.

\begin{remark}
The parameter $t_j$ delays the rate at which consultant $j$ provides information. 
Since (i) payoffs are not discounted, 
(ii) the noninformative signal $\emptyset$ does not change the belief on $\Omega$,
and (iii) there is a Markovian optimal strategy,
for the purpose of calculating the value
and the optimal strategy, a $(q_j,t_j)$-consultant with cost $c$ is equivalent to a $(q_j,1)$-consultant with cost $c^*:=\frac{c}{t_j}$.
In particular, to analyze three-signal investment problems,
it is w.l.o.g.~to assume that all consultants are estimators or revealers, albeit with a different consultation cost. Similarly, for an investment problem with consultants with varying consultation costs, an equivalent investment problem with a common consultation costs can be created by adjusting the probability of the noninformative signal.
\end{remark}

If there were more than one estimator or more than one revealer, an optimal strategy would use only one of each group: the estimator $j$ with the highest $q_{j}$ and the revealer $i$ with the highest $t_{i}$.

The next lemma states that the set of beliefs at which it is optimal to use a revealer is convex.
When consulting a revealer,
the belief changes only when the revealer reveals the state.
Note that if it is optimal to consult a reveal $j$ at a certain belief $p$,
then $V_J(p,c) = 1-\frac{c}{t_j}$.

\begin{lemma}
Let $J$ be a set of consultants in a three-signal investment problem that includes a revealer $j$ with parameter $t_j$, and let $c \in (0,1)$. 
The set of beliefs $p$ where $V_J(p,c) = 1-\frac{c}{t_j}$ is convex.
\end{lemma}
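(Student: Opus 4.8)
The plan is to show that the super-level set $E := \{p \in [0,1] : V_J(p,c) = 1 - \tfrac{c}{t_j}\}$ is an interval, using the convexity of $V_J$ in $p$ (Lemma~\ref{theorem:simple:properties}) together with the fact that $1 - \tfrac{c}{t_j}$ is exactly the payoff obtainable by the pure strategy ``consult revealer $j$ until the state is revealed, then invest correctly.'' First I would record the easy inequality: for \emph{every} belief $p$, following this strategy yields expected payoff $1 \cdot \prob(\text{state eventually revealed}) - c \cdot \E[\text{number of consultations}] = 1 - \tfrac{c}{t_j}$, since the revealer reveals the state at each stage with probability $t_j$ independently, so the number of consultations is geometric with mean $1/t_j$, and once revealed the correct action yields payoff $1$. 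Hence $V_J(p,c) \ge 1 - \tfrac{c}{t_j}$ for all $p$, and $E$ is precisely the set where this bound is tight, i.e. $E = \{p : V_J(p,c) \le 1 - \tfrac{c}{t_j}\}$.

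The second and main step is then immediate from convexity: the set $\{p : V_J(p,c) \le \lambda\}$ is convex for any constant $\lambda$ whenever $p \mapsto V_J(p,c)$ is convex, because sub-level sets of convex functions are convex. Applying this with $\lambda = 1 - \tfrac{c}{t_j}$ gives that $E$ is convex, hence an interval. So the argument reduces to two observations — the strategic lower bound and the sub-level-set property — neither of which is hard.

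The one point that needs a sentence of care is the claim $V_J(p,c) \ge 1 - \tfrac{c}{t_j}$: one must note that $\sigma_*$, the ``consult $j$ forever until revelation'' strategy, is a legitimate (Markovian) strategy in $\Sigma$, that its expected payoff is well defined (the geometric tail decays, so the undiscounted expected cost $c/t_j < \infty$, and here we also use the standing assumption $c < 1$ so this payoff is nontrivial), and that $\gamma_J(p,c;\sigma_*) = 1 - c/t_j$ for every $p$ — the payoff does not depend on $p$ precisely because the state is revealed with certainty before any action is taken. I expect the only genuine (if minor) obstacle to be making this last computation fully rigorous for an infinite-horizon strategy, i.e. justifying the exchange of limits implicit in ``expected payoff $= 1 - c/t_j$''; this is handled by dominated convergence using the geometric bound on the revelation time, exactly as in the proof of Theorem~\ref{revealer,estimator theorem}. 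Once that is in place, the convexity of $V_J(\cdot,c)$ from Lemma~\ref{theorem:simple:properties} closes the argument.
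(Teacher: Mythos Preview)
Your proof is correct and rests on the same two ingredients as the paper's---the global lower bound $V_J(p,c)\ge 1-c/t_j$ coming from the ``consult $j$ until revelation'' strategy, and the convexity of $V_J(\cdot,c)$ from Lemma~\ref{theorem:simple:properties}---but you combine them a bit more cleanly. The paper first invokes the symmetry of the three-signal problem to argue that if $p_*$ lies in the level set then so does $1-p_*$, then uses convexity to bound $V_J$ from above on $(p_*,1-p_*)$, and finally the strategic lower bound to force equality there. Your route bypasses symmetry entirely: once you observe that the level set coincides with the sublevel set $\{p:V_J(p,c)\le 1-c/t_j\}$, convexity of $V_J$ alone gives convexity of that set. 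This is slightly more elementary and would in fact work even without the symmetry hypothesis built into the three-signal setting.
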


\begin{proof}
Suppose that $V_{J}(p_*,c)=1-\frac{c}{t_j}$.
The symmetry of the problem implies that $V_{J}(1-p_*,c)=1-\frac{c}{t_j}$.
The convexity of the value function implies that $V_{J}(p,c)\leq 1-\frac{c}{t_j}$ for every $p \in (p_*,1-p_*)$,
while since the strategy that always consults $j$ is available to the DM, $V_{J}(p,c)\geq 1-\frac{c}{t_j}$ for every $p \in (p_*,1-p_*)$.
The claim follows.
\end{proof}

As consultants in three-signal investment problems are symmetric, Lemma \ref{theorem:simple:properties} implies the following properties of the value function.

\begin{corollary}
For every fixed cost, as a function of the prior, the value function is convex, symmetric around 1/2, monotone non-increasing from 0 to 1/2, and monotone non-decreasing from 1/2 to 1.
\end{corollary}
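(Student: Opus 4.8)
The plan is to derive all four properties from Lemma~\ref{theorem:simple:properties} together with a single symmetry observation: convexity is already granted by that lemma, the symmetry around $1/2$ comes from relabeling states/actions/signals, and monotonicity is then a purely formal consequence of convexity plus that symmetry.

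First I would establish $V_J(p_0,c) = V_J(1-p_0,c)$. Let $\phi$ be the involution on $\Omega \cup A \cup S \cup J$ that swaps $r \leftrightarrow \ell$, swaps $R \leftrightarrow L$, swaps the signals $r \leftrightarrow \ell$, and fixes $\emptyset$ and every consultant. The defining assumptions of a three-signal investment problem --- namely $S_j(r|r) = S_j(\ell|\ell)$, $S_j(r|\ell) = S_j(\ell|r)$, $S_j(\emptyset|r) = S_j(\emptyset|\ell)$ for every $j$, together with $u(R,r) = u(L,\ell) = 1$ and $u$ vanishing on mismatches --- say exactly that $S_j(\phi(s)\mid\phi(\omega)) = S_j(s\mid\omega)$ and $u(\phi(a),\phi(\omega)) = u(a,\omega)$. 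Extend $\phi$ to histories by $\phi\big((j_1,s_1),\dots,(j_k,s_k)\big) = \big((j_1,\phi(s_1)),\dots,(j_k,\phi(s_k))\big)$, and to strategies by $\sigma'(h) := \phi(\sigma(\phi(h)))$. A direct check shows that the play of $(\sigma',1-p_0)$ is the $\phi$-image of the play of $(\sigma,p_0)$: each history receives the same probability because the priors $p_0$ on $r$ and $1-p_0$ on $r$ are exchanged while the conditional signal laws are $\phi$-invariant, and the realized terminal payoff is unchanged because $u$ is $\phi$-invariant. Hence $\gamma_J(p_0,c;\sigma) = \gamma_J(1-p_0,c;\sigma')$, and since $\sigma \mapsto \sigma'$ is an involution of $\Sigma$, taking suprema gives $V_J(p_0,c) = V_J(1-p_0,c)$.

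Next I would record the elementary fact that a convex $f$ on $[0,1]$ with $f(p)=f(1-p)$ is non-increasing on $[0,1/2]$ and non-decreasing on $[1/2,1]$: for $x < y \le 1/2$ we have $y \in [x,1-x]$, so writing $y = \lambda x + (1-\lambda)(1-x)$ with $\lambda \in [0,1]$, convexity and symmetry give $f(y) \le \lambda f(x) + (1-\lambda) f(1-x) = f(x)$, and the other half follows by symmetry. Applying this to $p_0 \mapsto V_J(p_0,c)$, which is convex by Lemma~\ref{theorem:simple:properties} and symmetric by the previous paragraph, yields the monotonicity claims and completes the corollary.

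I do not expect a genuine obstacle; the only point requiring care is the bookkeeping that $\phi$ intertwines Bayesian updating and the payoff structure --- that is, that $\phi$ is a symmetry of the entire dynamic problem and not merely of the one-shot signal structure --- but this is precisely what the symmetry hypotheses of the three-signal model encode, so it reduces to unwinding definitions rather than to a substantive argument.
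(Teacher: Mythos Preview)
Your proposal is correct and follows essentially the same route the paper takes: the paper simply asserts that symmetry of the three-signal consultants together with Lemma~\ref{theorem:simple:properties} yields the corollary, and you have carefully unpacked exactly that---the involution $\phi$ makes the symmetry precise, and your convexity-plus-symmetry argument for monotonicity is the standard (and correct) way to extract the monotone pieces.
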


\begin{remark}
    If there were more than one estimator or more than one revealer, an optimal strategy would use only one of each group, that is, the estimator $j$ with the highest $q_{j}$ and the revealer $i$ with the highest $t_{i}$.
\end{remark}

As we now show, the monotonicity of the value as a function of the prior
implies that in the presence of a revealer, 
when the prior belief is close to $\frac{1}{2}$,
the optimal strategy consults only the revealer.

\begin{lemma}\label{lemma revealer payoff}
Let $J$ be a set of consultants in a three-signal investment problem that includes a revealer $j$ with parameter $t_j$, and let $c \in (0,1)$. If there is a belief $p_*$ such that $V_{J}(p_*,c)=1-\frac{c}{t_j}$, then for every $p \in [p_*,1-p_*]$ there is an optimal strategy that only consults the revealer, and $V_{J}(p,c)=1-\frac{c}{t_j}$.
\end{lemma}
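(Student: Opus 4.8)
The plan is to produce one explicit strategy that consults only the revealer, show that its expected payoff equals $1-\frac{c}{t_j}$ for \emph{every} prior, and then use the preceding lemma (convexity of the set $\{p : V_J(p,c)=1-\frac{c}{t_j}\}$) to conclude that this payoff coincides with the value on all of $[p_*,1-p_*]$.

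First I would reduce to the case $p_*\le\tfrac12$: by the symmetry of $V_J$ around $\tfrac12$ we have $V_J(1-p_*,c)=V_J(p_*,c)=1-\frac{c}{t_j}$, so replacing $p_*$ by $\min\{p_*,1-p_*\}$ does not shrink $[p_*,1-p_*]$. The set $\{p : V_J(p,c)=1-\frac{c}{t_j}\}$ is convex by the preceding lemma and contains both $p_*$ and $1-p_*$, hence contains the whole segment $[p_*,1-p_*]$; this yields the claimed identity $V_J(p,c)=1-\frac{c}{t_j}$ for all $p\in[p_*,1-p_*]$. Note that this segment lies in $(0,1)$, since $V_J$ equals $1-\frac{c}{t_j}<1$ on it while $V_J(0,c)=V_J(1,c)=1$, so the strategy described below is well defined at every belief in $[p_*,1-p_*]$.

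Next I would analyze the Markovian strategy $\sigma_j$ that consults the revealer $j$ at every belief in $(0,1)$ and, as soon as $j$ sends a revealing signal (posterior $0$ or $1$), selects the matching action. Since $j$ is a revealer, at each stage it sends a revealing signal with probability exactly $t_j$ in either state, and otherwise sends the noninformative signal $\emptyset$, which leaves the belief unchanged; so the number of consultations made by $\sigma_j$ is stochastically geometric with parameter $t_j$, with expectation $1/t_j$, and upon revelation the payoff is $u(R,r)=u(L,\ell)=1$ by the symmetry of the payoffs. Hence $\gamma_J(p,c;\sigma_j)=1-\frac{c}{t_j}$ for every prior $p\in(0,1)$. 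Combining this with the first step, for every $p\in[p_*,1-p_*]$ we get $\gamma_J(p,c;\sigma_j)=1-\frac{c}{t_j}=V_J(p,c)$, so $\sigma_j$ is optimal there and consults only the revealer, as required.

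There is essentially no genuine obstacle: the only points requiring care are that $\sigma_j$ terminates almost surely with finite expected cost (immediate from the geometric bound, so its payoff is well defined), and that the value identity holds at the closed endpoints $p_*$ and $1-p_*$ — but at $p_*$ this is part of the hypothesis and at $1-p_*$ it follows from symmetry, so $\sigma_j$ is optimal at the endpoints too.
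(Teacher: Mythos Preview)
Your proposal is correct and follows essentially the same route as the paper: use symmetry to get $V_J(1-p_*,c)=1-\tfrac{c}{t_j}$, use convexity to pin $V_J$ at $1-\tfrac{c}{t_j}$ on the whole interval, and identify this constant as the payoff of the strategy that repeatedly consults the revealer. The only cosmetic difference is that you invoke the preceding lemma for the convexity of the level set whereas the paper re-derives it inline, and you add a couple of extra sanity checks (endpoints in $(0,1)$, a.s.\ termination) that the paper leaves implicit.
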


\begin{proof}

The strategy that consults consultant $j$ until it reveals the state yields the payoff $1-\frac{c}{t_j}$.
Since $V_{J}(p_*,c)=1-\frac{c}{t_j}$, the symmetry of the problem implies that $V_{J}(1-p_*,c)=1-\frac{c}{t_j}$. The convexity of the value function implies that $V_{J}(p_*,c)=1-\frac{c}{t_j}$ for every $p \in [p_*,1-p_*]$, and the second claim follows.
The first claim holds since $1-\frac{c}{t_j}$ is the payoff that corresponds to the strategy that always consults the revealer with parameter $t_j$.
\end{proof} 

When consulting a revealer,
the belief changes only when the revealer reveals the state.
Since there is an optimal Markovian strategy,
we obtain that when the initial belief is $1/2$,
either there is an optimal Markovian strategy that always consults the revealer, or there is an optimal Markovian strategy that never consults him.

Lemma~\ref{lemma revealer payoff} allows us to strengthen Theorem~\ref{revealer,estimator theorem} for three-signal investment problems when the prior is $1/2$ as follows.

\begin{corollary}
Let $G=(1/2,J,c)$ be a three-signal investment problem, and suppose that one of the consultants is a revealer. One of the following statements holds: 

\begin{itemize}
\item There is an optimal strategy that does not consult any consultant.
\item There is an optimal strategy that only consults the revealer.
\item There is an optimal strategy that never consults the revealer.
 \end{itemize}
\end{corollary}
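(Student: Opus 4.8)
The plan is to begin with an optimal Markovian strategy $\sigma$ for $G=(1/2,J,c)$, which exists by Lemma~\ref{theorem:simple:properties}, and to case-split on what $\sigma$ prescribes at the belief $1/2$. By the remark that an optimal strategy uses at most one revealer, I may assume that $J$ contains a single revealer $j$, with parameter $t_j$. Two facts will be used repeatedly. First, starting from belief $1/2$, consulting $j$ produces the posterior $1/2$ (with probability $1-t_j$) or $0$ or $1$ (with total probability $t_j$); hence a Markovian strategy that consults $j$ at $1/2$ keeps returning to $1/2$ until the state is revealed. Second, substituting a revealer into the recursion \eqref{equ:recursive} shows that consulting $j$ at any belief $p$ yields continuation payoff $t_j+(1-t_j)V_J(p,c)-c$; comparing this with $V_J(p,c)$, and using that always-consulting $j$ until revelation secures $1-\frac{c}{t_j}$, one gets that if consulting $j$ at $p$ is optimal then $V_J(p,c)=1-\frac{c}{t_j}$ (this was already observed before the corollary).

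If $\sigma$ selects an action at the belief $1/2$, the first bullet holds at once. If $\sigma$ consults $j$ at $1/2$, then — since $\sigma$ is Markovian and optimal, and at the beliefs $0$ and $1$ the matching action is the unique optimal choice (consulting costs $c>0$ and cannot push the payoff above $1$, so any other action or consultation is strictly worse there) — the only beliefs reachable under $\sigma$ are $0,\tfrac12,1$, with $\sigma$ consulting $j$ at $\tfrac12$ and acting at $0$ and $1$. Thus $\sigma$ consults only the revealer, giving the second bullet.

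It remains to treat the case where $\sigma$ consults a non-revealer at the belief $1/2$. If $\sigma$ never consults $j$ at any belief it reaches, then $\sigma$ itself witnesses the third bullet. Otherwise $\sigma$ consults $j$ at some reachable belief $p$, so consulting $j$ at $p$ is optimal and therefore $V_J(p,c)=1-\frac{c}{t_j}$; by the symmetry of the three-signal problem, also $V_J(1-p,c)=1-\frac{c}{t_j}$. Since $\tfrac12$ lies in the interval with endpoints $p$ and $1-p$, convexity of $V_J$ (Lemma~\ref{theorem:simple:properties}) gives $V_J(\tfrac12,c)\le 1-\frac{c}{t_j}$; as the strategy that consults $j$ until revelation and then takes the matching action already secures $1-\frac{c}{t_j}$, we conclude $V_J(\tfrac12,c)=1-\frac{c}{t_j}$ and this strategy is optimal, which is the second bullet. (Equivalently, one may invoke Lemma~\ref{lemma revealer payoff} with $p_*=\tfrac12$ once $V_J(\tfrac12,c)=1-\frac{c}{t_j}$ is established.)

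I expect the last case to be the only delicate point: one must notice that consulting $j$ at \emph{any} belief $p$ pins the value there to $1-\frac{c}{t_j}$, and then transport this to the belief $\tfrac12$ by combining symmetry with convexity, using that $\tfrac12$ always lies between $p$ and $1-p$. The reduction to a single revealer and the identification of the unique optimal action at the revealed beliefs $0$ and $1$ are routine.
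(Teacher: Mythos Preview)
Your argument is correct and follows the paper's own (implicit) approach. The paper does not give a separate proof of the corollary but derives it from the observation, stated just before the first lemma of Section~\ref{section:Extensions}, that if consulting the revealer is optimal at some belief $p$ then $V_J(p,c)=1-\tfrac{c}{t_j}$, together with Lemma~\ref{lemma revealer payoff}; your case analysis on what an optimal Markovian strategy does at belief $\tfrac12$---including the delicate subcase where $\sigma$ first consults a non-revealer and only later the revealer at some reachable $p$, handled via symmetry and convexity---spells out exactly this reasoning.
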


\setcounter{figurecounter}{\thefigurecounter+1}

\begin{corollary}\label{revealer superiority}
    For each prior $p_0 \in [0,1]$, each $q$, and each $t$, there is a cost $C$ such that for every $c\le C$ the optimal strategy in the investment problem $G=(p_0,\{(q,1),(1,t)\},c)$ is to only consult the revealer.
\end{corollary}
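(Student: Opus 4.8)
The plan is to compare two candidate strategies directly and show that, for small $c$, repeatedly consulting the revealer beats any strategy that ever consults the estimator. Since there are only two consultants here — an estimator $(q,1)$ and a revealer $(1,t)$ — and the problem has a Markovian optimal strategy, the whole argument reduces to an explicit payoff comparison. First I would note that the strategy $\sigma_*$ that consults the revealer until it reveals the state and then invests correctly yields payoff exactly $1-\frac{c}{t}$, uniformly in $p_0$ (each round costs $c$, the revelation time is geometric with parameter $t$, so the expected cost is $\frac{c}{t}$). This is the lower bound on the value we want to show is tight.

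Next I would bound from above the payoff of any strategy that uses the estimator at least once. The estimator is not revealing, so by the argument already used in the proof of Theorem~\ref{revealer,estimator theorem}, let $q^* := \max\{q, 1-q\} < 1$ be the worst-case one-step contraction factor on the log-likelihood ratio. After $n$ estimator consultations the posterior stays within the interval between $\frac{p_0 (q^*)^n}{p_0(q^*)^n + (1-p_0)(1-q^*)^n}$ and its reflection, so the best achievable expected gain from investing, minus the cost $cn$, is at most $1 - \Omega\big((1-q^*)^n / \text{const}\big) - cn$ (in the symmetric payoff normalization $u(R,r)=u(L,\ell)=1$). Optimizing over $n$, the gap from $1$ is of order at least some fixed positive constant $\delta(q) > 0$ that does not depend on $c$, minus a term linear in $c$. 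Hence any strategy touching the estimator earns at most $1 - \delta(q) + O(c)$, while $\sigma_*$ earns $1 - \frac{c}{t}$; for $c$ small enough the latter strictly dominates. This is essentially a specialization of Theorem~\ref{revealer,estimator theorem} with $J^- = \{(q,1)\}$ and $j_* = (1,t)$, so I would invoke that theorem to conclude that for $c \le C$ every optimal strategy consults the revealer at least once.

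It remains to upgrade ``consults the revealer at least once'' to ``only consults the revealer.'' Here I would use the three-signal structure. By Theorem~\ref{revealer,estimator theorem}, for $c \le C$ the value at $p_0$ is achieved by consulting the revealer, and in particular $V_J(p_0,c) \ge 1 - \frac{c}{t}$. But when the prior is $1/2$ this already forces, via Lemma~\ref{lemma revealer payoff} (with $p_* = 1/2$), an optimal strategy that only consults the revealer, and $V_J(p,c) = 1-\frac{c}{t}$ on all of $[1/2, 1/2] \cup$ its neighborhood. For a general prior $p_0$, I would argue that consulting the revealer once does not change the belief unless it reveals (in which case the problem is over), so the only content of an optimal strategy is what it does at the belief $p_0$ itself; since $1-\frac{c}{t}$ is both an upper bound (any consultation of the estimator is suboptimal, from the previous paragraph, and any further consultation of the revealer after a no-reveal returns to belief $p_0$) and a lower bound (the always-revealer strategy attains it), the always-revealer strategy is optimal. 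The cleanest route is: shrink $C$ if necessary so that $1 - \frac{c}{t} > \max\{p_0, 1-p_0\}$ and so that no estimator-using strategy reaches $1-\frac{c}{t}$ (possible by the constant-gap bound above, since $\frac{c}{t}\to 0$); then the only strategies attaining the value are those that consult solely the revealer.

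The main obstacle I anticipate is the last step — ruling out \emph{mixed} itineraries that consult the revealer a few times, then switch to the estimator, or interleave the two. Consulting the revealer without a reveal leaves the belief unchanged, so such itineraries are payoff-equivalent to ones that consult the estimator from belief $p_0$ directly (just with extra wasted cost $c$ per revealer round), and those are handled by the constant-gap estimate; making this reduction fully rigorous, together with the observation that the noninformative outcome of the revealer is the identity on beliefs, is the one place where a little care is needed. Everything else is a direct application of Theorem~\ref{revealer,estimator theorem} and Lemma~\ref{lemma revealer payoff}.
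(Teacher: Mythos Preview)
Your overall plan—invoke Theorem~\ref{revealer,estimator theorem}, then use the fact that an unsuccessful revealer consultation leaves the belief unchanged—matches the paper's implicit reasoning (the corollary is stated without proof, as a consequence of Theorem~\ref{revealer,estimator theorem} and Lemma~\ref{lemma revealer payoff}). However, your quantitative claim in the second paragraph is wrong, and the error propagates into the upgrade step.

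Specifically, you assert that after optimizing over $n$, the gap between $1$ and the best estimator payoff is ``at least some fixed positive constant $\delta(q)>0$ that does not depend on $c$.'' This is false. The Bayes error after $n$ estimator consultations decays like $\rho^n$ for some $\rho\in(0,1)$, so the payoff is roughly $1-C\rho^n-cn$; optimizing over $n$ gives $n\approx\tfrac{1}{|\ln\rho|}\ln(1/c)$ and a gap of order $\Theta(c\ln(1/c))$, which tends to $0$ as $c\to 0$. There is no constant gap. This does not break the appeal to Theorem~\ref{revealer,estimator theorem}—since $c\ln(1/c)=\omega(c)$, estimator-\emph{only} strategies are still worse than $1-c/t$ for small $c$—but it does break your paragraph~3, where you claim ``no estimator-using strategy reaches $1-c/t$ (possible by the constant-gap bound above).'' Your bound controls only strategies that \emph{never} consult the revealer; it says nothing about mixed itineraries such as ``consult the estimator once; on signal $\ell$ invest $L$, on signal $r$ switch to the revealer.'' Such a strategy can in fact strictly beat $1-c/t$ for moderately small $c$ (e.g.\ $p_0=0.01$, $q=0.9$, $t=0.1$, $c=0.001$), so it is not excluded by any $c$-independent gap. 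Your paragraph~4 reduction only handles itineraries that \emph{begin} with the revealer and later switch to the estimator, not the reverse order.

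The paper's intended route avoids direct payoff bounds on mixed strategies. Theorem~\ref{revealer,estimator theorem} forces any optimal Markovian strategy, started from $p_0$, to consult the revealer at some reachable belief $p_*$; since a no-reveal outcome leaves the belief at $p_*$, Markovianity makes the strategy consult the revealer there repeatedly, so $V_J(p_*,c)=1-c/t$. Lemma~\ref{lemma revealer payoff} then gives $V_J(p,c)=1-c/t$ (and optimality of the always-revealer) on the entire interval between $p_*$ and $1-p_*$. The remaining work is to check that, for $c$ small enough depending on $p_0$, this interval contains $p_0$; this is the step you should supply in place of the constant-gap argument.
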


In three-signal investment problems, even though the set of signals is greatly limited, there is still no clear ranking between different consultants. 
If a consultant has a higher $q$ and a higher probability of providing information $t$ than another consultant, then the investor will never choose the latter. But if one consultant has higher $q$ and the other provides information more often, the identity of the better consultant depends on the prior and the cost, as well as the other consultants in the investment problem. Therefore, the optimal strategy may use both consultants. 
As Example~\ref{graph 8} shows, there can be consultants that can both be used under an optimal strategy in a range of priors.
The following example shows that a revealer and an estimator can provide the same value in a range of priors.
In particular, it exhibits two different investment problems having the same value function, and thus the value function is not uniquely associated with one investment problem.

\begin{example}\label{rational ratio example}
Let $p_0$ be 1/2. Consider the consultants $j_1=(q_1=0.8,t_1=1)$ and $j_2=(q_2=16/17,t_2=17/50)$. 
Define $\rho_t$ (resp., $\lambda_t$) to be the number of times until stage $t$ in which the signal is $r$ (resp., $\ell$). Let $G_1=(1/2,\{j_1\},c)$ and $c$ be such that the optimal strategy in $G_1$  is to consult $j_1$ until 
$|\rho_t - \lambda_t| = 2$. 
The expected number of stages until
$|\rho_t - \lambda_t| = 2$
is $2/(0.8^2+0.2^2) = 50/17$. 
When this happens, 
the posterior belief is $16/17$ or $1/17$,  
and therefore the value is $16/17-c \cdot 50/17$. 

Consider now the investment problem $G_2=(1/2,\{j_2\},0.05)$. The optimal strategy is to consult $j_2$ until one gets a non-\hbox{Silent} signal once. 
When this occurs, 
the posterior belief is either $16/17$ or $1/17$, and the expected number of stages to get a (non-\hbox{Silent}) signal is $50/17$.
Therefore, the problem $G_2$ has the same value as $G_1$. 

In fact, these two investment problems have the same value for a neighborhood of $p_0=1/2$ and $c=0.05$.
\end{example}

\let\cleardoublepage\clearpage

\printbibliography[heading=bibintoc]

\appendix

\section{Proofs}

\subsection{Proof of Lemma~\ref{theorem:simple:properties}}
\label{appendix:theorem:simple:properties}

An optimal strategy exists since payoffs are bounded by 1
and since the payoff is a continuous function of the strategy in the product topology.

To prove that $V_J$ is convex in $p_0$ for every fixed $c$,
and convex in $c$ for every fixed $p_0$, 
we start by showing that $\gamma_\sigma(p_0,J,c)$ is bilinear in $p_0$ and $c$.
Indeed, 
denote by $P_r$ (resp., $P_\ell$) the probability that under $\sigma$ the investor chooses $R$ (resp., $L$) conditional that the state is $r$ (resp., $\ell$). 
Denote by $E_r$ (resp., $E_\ell$) the expected number of stages until termination conditional that the state is $r$ (resp., $\ell$). 
Note that $P_r$, $P_\ell$, $E_r$, and $E_\ell$ are independent of $p_0$ and $c$.
With these notations,
\begin{equation}
\label{equ:payoff}
\gamma_J(p_0,c;\sigma)=u(R,r) \cdot P_r \cdot  p_0+ u(L,\ell) \cdot P_\ell \cdot  (1-p_0)-\bigl(E_r\cdot p_0 +E_\ell\cdot (1-p_0)\bigr) \cdot  c,
\end{equation}
which is bilinear in $p_0$ and $c$.

The convexity properties of $V_J$ now follow since this function is the maximum of functions that are bilinear in $p_0$ and $c$.

Eq.~\eqref{equ:payoff} implies that $\gamma_\sigma$ is non-increasing in $c$ for every fixed $\sigma$, $J$, and $p_0$,
and hence $V_J(p_0,\cdot)$,
as the maximum of functions that are non-increasing in $c$,
is non-increasing in $c$.

Eq.~\eqref{equ:recursive} follows from Bellman's equation.

We finally show that $V_J$ is continuous.
Indeed, on $(0,1) \times (0,1)$ continuity of $V_J$ follows from its convexity.
For $p_0=1$, continuity follows since $V_J(1,c) = u(R,r)$  and since by Eq.~\eqref{equ:recursive} $v_J(p,c) = pu(R,r)$ whenever $p$ is sufficiently close to 1.
Continuity of $V_J$ at $p_0=0$ follows analogously.

\end{document}